\newtheorem{theorem}{Theorem}[section]
\newtheorem{lemma}{Lemma}[section]
\title{Collapse of n vortices
\footnote{Presented at the XX Polish Fluid Mechanics Conference,
Gliwice, 17-20 September 2012}}
\author{Marek LEWKOWICZ, Henryk KUDELA}
\def\C{\mathbb{ C}}
\def\R{\mathbb{ R}}
\begin{document}
\maketitle
\pagestyle{headings}
\pagenumbering{arabic}
\section{Introduction}
The importance of the point vortices in applications
is due to the dominant role of the coherent vortical structures
in many two-dimensional turbulent flows. 
Point-vortex dynamics is responsible 
for bringing the vortices together 
and in this way it determines 
what kind of a merging process will occur. 
The number of vortices in a flow 
can be quite large; 
this makes the complicated 
dynamical description intractable. 
On the other hand, 
few-vortex systems 
can be investigated in much more detail. 
The evolution of such systems 
has been studied for more than 130 years.

In 1883 Kirchhoff showed that the motion 
can be put into the Hamilton framework. 
Systems of three vortices are integrable 
since they possess enough 
Poisson commuting invariants. 
It was quite surprising that there exist 
three-vortices systems whose evolution 
leads to a collapse to a point in finite time. 
This phenomenon can be treated 
as a change of scale, 
characteristic for turbulent motion. 
A deeper insight into the way 
systems of vortices can collapse 
could thus lead to a better understanding 
of turbulence itself.

Collapsing systems of three vortices 
were first described by Groebli in 1877 \cite{grobli}
and rediscovered by Aref (see \cite{aref2010})
and independently by Novikov and Sedov \cite{novikov} 
around 1979. 
The motion of four vortices 
is no longer integrable in general. 
Nevertheless in 1979 Novikov and Sedov 
gave some special, explicit examples 
of collapsing systems of four and five vortices.

In 1987 O'Neil proved (\cite{oneil}) the existence 
of collapsing systems of $n$ vortices for arbitrary $n$. 
The proof makes use
of a system of algebraic equations
that self-similar collapsing configurations should satisfy.
O'Neil proved that for some circulations 
the set of solutions is a non-empty algebraic curve 
in the configuration space. 
Thus solutions exist, although the proof does not indicate 
how they should be found. 
In fact \cite{oneil} does not provide any examples 
for the interesting case $n\ge6$.

We solve numerically the above mentioned (non-linear) 
algebraic system of equations 
and obtain collapsing configurations 
for many circulations for six or more vortices. 
To the best of our knowledge no such examples 
have appeared in the literature so far. 
A precise description of the algorithm used is given below. 
The solutions we found are taken as initial conditions 
to the differential evolution equation. 
Standard numerical procedures give then trajectories 
whose collapsing property can be directly seen. 
A sample collapsing system of seven vortices is shown in Fig. 3, 4 and 5. 
Note that in order to obtain a system 
whose diameter lessens sufficiently during evolution, 
it is necessary to use extremely high precision 
both when calculating the initial state 
and when solving the evolution equation. 

It may be worth mentioning that although
the O'Neil's results on existence of collapsing configurations
give some moral support in seeking such configurations,
nevertheless our numerical results do not rely on his theorems.
The examples we found prove independently
that collapsing configurations of $n$ vortices do exist,
obviously only for those values of $n$, 
for which the calculations have been performed.
In particular we have examples of collapsing configurations
for such circulations, for which O'Neil's theorems do not work.

\section{Basic notions and facts}
We start with a review of some basic notions and facts,
which will help us put our results in a proper perspective.
For this basic material one can consult 
e.g. \cite{oneil}, \cite{garduno}.

A two-dimensional fluid motion
can be discretized 
by dividing the vorticity field 
into regions $U_i$ and replacing each of them
by a point vortex $z_i$
given a suitable circulation $\Gamma_i$.
The idea can be traced back to
Helmholtz and Kirchhoff.
The equations of motion of the system $z(t)=(z_1,…,z_n)\in\C^n$ 
of $n$ vortices are 
$$\frac{dz_k(t)}{dt}=V_k(z(t))
=\frac{i}{2\pi}
\sum_{l=1,l\neq k}^n
\Gamma_l\ \frac{z_k-z_l}{|z_k-z_l|^2}$$
with nonzero real numbers $\Gamma_l$.
Let's observe that our dynamical system
is a hamiltonian system with respect to the symplectic form
$$\Omega(z,w)
=-Im\sum_{k=1}^n\Gamma_kz_k\bar{w}_k
=\sum_{k=1}^n\Gamma_kdx_kdy_k.$$
This means that
for some hamiltonian total energy function $H$,
the vector field $V$ should be dual to $dH$ 
with respect to $\Omega$.
By definition, we need $i_{V}\Omega=dH$
or equivalently $\Omega(V,U)=UH$ 
(the directional derivative of $H$ along $U$)
for any vector field $U$. 
In fact
$$H=-\frac12\sum_{k<l}\Gamma_k\Gamma_l\log|z_k-z_l|$$
would do.

Let's introduce the size $S(z)\in\R$ 
and the moment of vorticity $M(z)\in\C$ by
$$S(z)=\sum_k\Gamma_k|z_k|^2,
\quad M(z)=\sum_k\Gamma_kz_k.$$
Observe that our hamiltonian system 
is rotationally and translatory invariant
and therefore, except for the hamiltonian $H$ itself,
it has three additional integrals of motion
(that is real functions conserved along the trajectories):
$S$ and the real and imaginary part of $M$.
It follows that the dynamical system of three vortices 
is integrable.

Two systems of vortices $z$ and $w$ 
are said to form the same configuration 
if they are similar, i.e., $w_k=az_k+b$ 
for some complex $a,b$ with $a\neq0$. 
The configuration space for $n$ vortices has dimension $2n-4$
since the group of similarities $z\to az+b$ is four-dimensional.
It is easy to see that $w_k=az_k+b$
implies $V(w)=\frac{1}{\bar{a}}V(z)$
and therefore any trajectory $t\to z(t)$
gives a trajectory 
$t\to az\left(\frac{t}{|a|^2}\right)+b$
with initial conditions $az(0)+b$.
We express this by saying that similar 
systems are dynamically equivalent. 
A system $z(0)$ is said to collapse if for some $t_0>0$
its trajectory $z(t)$ converges to a point.
All systems similar to $z(0)$ also collapse;
thus the collapsing families are at least four-dimensional.
In fact, as O'Neil's work shows, quite often they are five-dimensional.
It is more convenient to speak of collapsing configurations 
instead of collapsing systems:
the families described by O'Neil are one-dimensional 
(form algebraic curves in the configuration space).

A system is called self-similar 
if it remains similar to the initial state during evolution.
This amounts to say that the trajectory $z(t)$
is a fixed point in the configuration space.
One also says its configuration is stationary.
It is believed that any collapsing system must be self-similar.
Therefore all collapsing systems considered in this paper
are assumed to be self-similar.

\begin{lemma}
Let $w$ be a system of vortices and $z(t)$ the trajectory starting at $w$. 
The following conditions are equivalent.
\begin{itemize}
\item (a) The system $w$ is self-similar.
\item (b) $V_k(w)-V_l(w)=\omega(w_k-w_l)$ for some $\omega\in\C$.
\item (c) $V_k(w)=\omega w_k-p$ for some $\omega,p\in\C$.
\item (d) The system $w$ belongs to one of the following five classes.
\begin{itemize}
\item (i) stationary: $V_k(w)=0$ or equivalently $z(t)=w$,
\item (ii) translatory: $V_k(w)=v\neq0$ or equivalently  $z(t)=w+tv$,
\item (iii) rotational: $V_k(w)=i\lambda(w_k-p)$, $\lambda\neq0$,
$z(t)=p+e^{i\lambda t}(w-p)$,
\item (iv) collapsing: $V_k(w)=\omega(w_k-p)$,
$Re(\omega)<0$,
$$z(t)=p+\sqrt{2 Re(\omega)t+1}
e^{i\frac{Im(\omega)}{2 Re(\omega)}\ln(2 Re(\omega)t+1)}(w-p),$$
\item (v) expanding: $V_k(w)=\omega(w_k-p)$,
$Re(\omega)>0$, $z(t)$ given by the same 
formula as for the collapsing system.
\end{itemize}
\end{itemize}
\end{lemma}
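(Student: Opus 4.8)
The plan is to establish the chain $(a)\Rightarrow(c)$, $(b)\Leftrightarrow(c)$, and $(c)\Rightarrow(d)\Rightarrow(a)$; together these give the full equivalence. The step $(b)\Leftrightarrow(c)$ is pure algebra: (b) says precisely that $V_k(w)-\omega w_k$ is independent of $k$, and denoting this common value by $-p$ gives (c), while (c) plainly implies $V_k(w)-V_l(w)=\omega(w_k-w_l)$. For $(a)\Rightarrow(c)$, recall that in a genuine vortex system the points are not all equal, say $w_1\neq w_2$; if the trajectory is written $z_k(t)=a(t)w_k+b(t)$, then $a(t)=\bigl(z_1(t)-z_2(t)\bigr)/(w_1-w_2)$ and $b(t)=z_1(t)-a(t)w_1$, so $a,b$ are as smooth as the trajectory, with $a(0)=1$ and $b(0)=0$. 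Differentiating $z_k=aw_k+b$ at $t=0$ gives $V_k(w)=\dot a(0)\,w_k+\dot b(0)$, which is (c) with $\omega=\dot a(0)$ and $p=-\dot b(0)$.

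For the remaining implications, assume (c): $V_k(w)=\omega w_k-p$. If $\omega=0$, then $V_k(w)=-p$ for every $k$, and the rigid translation $z_k(t)=w_k-pt$ satisfies $\dot z_k=-p=V_k(w)=V_k(z(t))$, the last equality because $V$ depends only on the differences $z_k-z_l$; by uniqueness this is the trajectory, stationary if $p=0$ and translatory with $v=-p$ otherwise. If $\omega\neq0$, set $q=p/\omega$ so that $V_k(w)=\omega(w_k-q)$, and try $z_k(t)=q+a(t)(w_k-q)$ with $a(0)=1$. Since $z_k(t)=a(t)w_k+\bigl(1-a(t)\bigr)q$ is similar to $w$, the identity $V(\alpha z+\beta)=\bar\alpha^{-1}V(z)$ gives $V_k(z(t))=\overline{a(t)}^{\,-1}\omega(w_k-q)$, whereas $\dot z_k(t)=\dot a(t)(w_k-q)$; as the numbers $w_k-q$ are not all zero, the ansatz solves the evolution equation if and only if $\dot a\,\bar a=\omega$.

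Writing $a=re^{i\theta}$ splits $\dot a\bar a=\omega$ into $r\dot r=\mathrm{Re}\,\omega$ and $r^2\dot\theta=\mathrm{Im}\,\omega$, with $r(0)=1$, $\theta(0)=0$, whence $r^2=2(\mathrm{Re}\,\omega)\,t+1$. If $\mathrm{Re}\,\omega=0$ (so $\omega=i\lambda$, $\lambda=\mathrm{Im}\,\omega\neq0$) then $r\equiv1$, $\theta=\lambda t$, $a(t)=e^{i\lambda t}$: the rotational case. If $\mathrm{Re}\,\omega\neq0$ then $\dot\theta=\mathrm{Im}(\omega)/(2(\mathrm{Re}\,\omega)t+1)$ integrates to $\theta=\tfrac{\mathrm{Im}\,\omega}{2\,\mathrm{Re}\,\omega}\ln\bigl(2(\mathrm{Re}\,\omega)t+1\bigr)$, and $z_k(t)=q+a(t)(w_k-q)$ is exactly the displayed formula; the common factor $r=\sqrt{2(\mathrm{Re}\,\omega)t+1}$ vanishes at the finite time $t_0=-1/(2\,\mathrm{Re}\,\omega)$ when $\mathrm{Re}\,\omega<0$ (collapse to $q$) and grows without bound when $\mathrm{Re}\,\omega>0$ (expansion). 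In each of the five sub-cases $z_k(t)$ is an affine image of $w$, so (d) holds, and by uniqueness of solutions of the evolution equation this affine family is the actual trajectory through $w$; this yields $(c)\Rightarrow(a)$ and, a fortiori, $(d)\Rightarrow(a)$. The only genuinely computational step is the integration of $\dot a\bar a=\omega$; the one point that needs care is the non-degeneracy of $w$ invoked in $(a)\Rightarrow(c)$ and in setting up the ansatz, without which the scaling factor $a(t)$ would not be well defined.
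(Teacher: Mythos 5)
The paper states this lemma without any proof --- it appears in the ``Basic notions and facts'' section with a pointer to \cite{oneil} and \cite{garduno} for background --- so there is no in-paper argument to compare against; your proposal supplies the missing proof and it is correct. The chain $(a)\Rightarrow(c)$, $(b)\Leftrightarrow(c)$, $(c)\Rightarrow(d)\Rightarrow(a)$ covers all the equivalences, and each step checks out: recovering $a(t),b(t)$ from two distinct vortices makes the similarity factor smooth and turns self-similarity into $(c)$ by differentiation at $t=0$; the scaling identity $V(\alpha z+\beta)=\bar\alpha^{-1}V(z)$, which the paper itself records just before the lemma, reduces the ansatz $z_k=q+a(t)(w_k-q)$ to the scalar equation $\dot a\,\bar a=\omega$; and the polar-coordinate integration $r^2=2(\mathrm{Re}\,\omega)t+1$, $\theta=\tfrac{\mathrm{Im}\,\omega}{2\,\mathrm{Re}\,\omega}\ln(2(\mathrm{Re}\,\omega)t+1)$ reproduces exactly the displayed formula, including the domain $t<t_0=-1/(2\,\mathrm{Re}\,\omega)$ noted after the lemma. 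You were also right to flag the two places where non-degeneracy of $w$ is used (defining $a(t)$ from $w_1\neq w_2$, and cancelling the factors $w_k-q$, not all of which can vanish since the $w_k$ are distinct), and to close the loop with uniqueness of solutions of the evolution ODE so that the explicit affine family is the actual trajectory. This is precisely the argument one would expect the authors (or O'Neil) to have in mind.
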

Note that a collapsing trajectory is defined
for $t\in(-\infty,t_0)$,
$t_0=\frac{-1}{2 Re(\omega)}>0$,
and that in fact $\lim_{t\to t_0}|z_k(t)-p|=0$.
Note also that $\omega$ in (b), (c) and (d) is the same. 
The zero value of $\omega$ corresponds to (d-i..ii).
A system is collapsing, expanding or rotational (d-iii..v)
iff it satisfies (b) or (c) with $\omega\neq0$.

\section{Necessary conditions for collapsing}
\nopagebreak

In this section we shall specify 
and reformulate algebraic conditions
that collapsing configurations 
and their circulations should satisfy.
These equations will be solved numerically 
in the following sections.

We start with conditions for circulations.
For a circulation $\Gamma=(\Gamma_k)_k$
we define the angular momentum $L$ 
and the total circulation $\sigma$ by
$$L=\sum_{i<j}\ \Gamma_i\Gamma_j,
\quad\sigma=\sum_i\Gamma_i.$$
\begin{lemma}
If a collapsing configuration exists then necessarily
$$L=0,\quad\sigma\neq0.$$
\end{lemma}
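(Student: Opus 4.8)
Proof proposal for Lemma 3.1 ($L=0$, $\sigma\neq0$ for collapsing configurations).

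\section*{Proof proposal}

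The plan is to combine the explicit shape of a self-similar collapsing trajectory, supplied by the Lemma of Section~2, with the fact that the Hamiltonian $H$ is an integral of motion. Assume $w$ is a collapsing configuration. By part (d)(iv) of that Lemma there are $\omega,p\in\C$ with $Re(\omega)<0$ such that the trajectory through $w$ has the form $z_k(t)=p+f(t)(w_k-p)$ for a complex-valued function $f$ with $f(0)=1$ and $|f(t)|=\sqrt{2Re(\omega)t+1}$, which decreases strictly to $0$ on $[0,t_0)$; in particular $|f|$ is non-constant. Since the $w_k$ are pairwise distinct (otherwise the velocity field is undefined), we have $|z_k(t)-z_l(t)|=|f(t)|\,|w_k-w_l|$ for all $k\neq l$.

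To get $L=0$ I would substitute this into $H$. Because the logarithm turns the common scale factor into an additive term,
$$H(z(t))=-\tfrac12\sum_{k<l}\Gamma_k\Gamma_l\bigl(\log|f(t)|+\log|w_k-w_l|\bigr)=-\tfrac{L}{2}\log|f(t)|+H(w).$$
As $H$ is conserved, $H(z(t))=H(w)$ for every $t$, so $L\log|f(t)|$ is constant in $t$; since $\log|f(t)|$ is non-constant this forces $L=0$. I expect this identity for $H(z(t))$ to be the only real step: it is essential to use conservation of the Hamiltonian rather than of $S$ and $M$ alone, because the quantity $\sigma S-|M|^2=\sum_{k<l}\Gamma_k\Gamma_l|z_k-z_l|^2$ is also conserved but only yields $\sigma S(w)=|M(w)|^2$, and the logarithmic interaction is precisely what converts the self-similar rescaling into a clean multiple of $L$.

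For $\sigma\neq0$ I would argue by contradiction. If $\sigma=0$, then using $L=0$ already established,
$$0=\sigma^2=\Bigl(\sum_i\Gamma_i\Bigr)^2=\sum_i\Gamma_i^2+2\sum_{i<j}\Gamma_i\Gamma_j=\sum_i\Gamma_i^2+2L=\sum_i\Gamma_i^2,$$
which forces $\Gamma_i=0$ for all $i$, contradicting the standing assumption that the circulations are nonzero. Hence $\sigma\neq0$. This last part is a one-line algebraic consequence of $L=0$ together with the non-vanishing of the circulations, so I do not anticipate any genuine obstacle there; the substance of the argument is entirely in the conservation-of-$H$ computation above.
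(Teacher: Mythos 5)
Your proposal is correct and follows essentially the same route as the paper: both derive $H(z(t))=H(w)-\tfrac{L}{2}\log(\text{scale factor})$ from the logarithmic form of the Hamiltonian and a nontrivial self-similar rescaling, conclude $L=0$ from conservation of $H$, and then get $\sigma\neq0$ from the identity $\sigma^2-2L=\sum_i\Gamma_i^2>0$. The only cosmetic difference is that you invoke the explicit trajectory formula from the classification lemma, while the paper only uses that at some time the configuration is similar to the initial one with factor $c\neq1$.
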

\begin{proof}
Suppose $z(t)$ is a trajectory and for some $t$ 
$z(t)$ is similar to $z(0)$ 
with the proportionality factor $c\neq1$.
Thus $\log|z_k(t)-z_l(t)|=\log c|z_k(0)-z_l(0)|
=\log|z_k(0)-z_l(0)|+\log c$
and $H(z(t))=H(z(0))-\frac12L\log c$.
Since the hamiltonian is constant 
along the trajectory and $\log c\neq0$,
$L$ must  be zero.
Furthermore
$$\sigma^2-2L=\left(\sum_i\Gamma_i\right)^2
-\sum_{i\neq j}\Gamma_i\Gamma_j
=\sum_i\Gamma_i^2>0,$$
so that $L=0$ implies $\sigma\neq0$.
\end{proof}
From now on we assume that $L=0$, $\sigma\neq0$.
\begin{lemma}
If $\sigma\neq0$ then any system $z$ 
admits a unique translate $w=z-p$
(which means that $w_k=z_k-p$ for some $p\in\C$), 
such that $M(w)=0$.
\end{lemma}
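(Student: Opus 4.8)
\emph{Proof proposal.} The plan is to translate the problem into a single linear equation for the shift vector $p\in\C$ and solve it explicitly. First I would write down how $M$ transforms under a translation: if $w_k=z_k-p$, then by linearity of the sum defining the moment of vorticity,
$$M(w)=\sum_k\Gamma_k(z_k-p)=\sum_k\Gamma_kz_k-p\sum_k\Gamma_k=M(z)-\sigma p.$$
Thus the map $p\mapsto M(z-p)$ is affine in $p$ with linear part $-\sigma\,\mathrm{id}$.

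Next, to find a translate $w$ with $M(w)=0$, I would simply set the right-hand side above equal to zero, obtaining $\sigma p=M(z)$. Since $\sigma\neq0$ by hypothesis, this equation has the unique solution
$$p=\frac{M(z)}{\sigma}\in\C,$$
and the corresponding translate $w_k=z_k-M(z)/\sigma$ satisfies $M(w)=0$. Uniqueness follows from the same equation: any $p$ with $M(z-p)=0$ must satisfy $\sigma p=M(z)$, and division by the nonzero scalar $\sigma$ forces $p=M(z)/\sigma$.

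I do not expect any genuine obstacle here: the statement is essentially the observation that $M$ is an affine functional of the translation parameter whose linear coefficient is exactly $\sigma$, so invertibility of "multiplication by $\sigma$'' on $\C$ — i.e. $\sigma\neq0$ — is precisely what is needed. The only point worth stating carefully is that the hypothesis $\sigma\neq0$ is used twice (existence and uniqueness), and that it is guaranteed in our setting by Lemma 3.1, which we have already assumed from now on.
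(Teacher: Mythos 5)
Your proof is correct and follows essentially the same route as the paper: compute $M(z-p)=M(z)-\sigma p$ and solve for $p$ using $\sigma\neq0$. (Incidentally, your formula $p=M(z)/\sigma$ is the right one; the paper's printed $p=M(w)/\sigma$ is a typo.)
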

\begin{proof}
The moment of the translate is
$$M(w)=M(z-p)=\sum\Gamma_k(z_k-p)=M(z)-p\sigma.$$
Thus $p=M(w)/\sigma$.
\end{proof}
It follows that each configuration class
contains a representative satisfying $M=0$
and that representative is unique up to a complex factor.
In other words the configuration space 
is in a natural one-to-one
correspondence with the complex projective space $\C P(W)$
over the kernel $W$ of the linear map $M$.
This will be of practical importance since now
a set of configurations can be specified 
by a system of homogeneous equations,
one of them being $M=0$.
\begin{lemma}
(See \cite{oneil}, Lemma 1.2.4 and Lemma 1.2.7)
Suppose that $L=0$, $\sigma\neq0$, $M(w)=0$.
Then the following conditions are equivalent
\begin{itemize}
\item (a) $w$ is self-similar and expanding, 
collapsing or rotational.
\item (b) $V_k(w)-V_l(w)=\omega(w_k-w_l)$ for any $k,l$ 
and some non-zero $\omega\in\C$.
\item (c) $V_k(w)=\omega w_k$ for any $k$ 
and some non-zero $\omega\in\C$.
\item (d) at least one $V_j(w)$ is 
non-zero and for some fixed $k$ 
the equality $w_lV_k(w)=w_kV_l(w)$ 
is satisfied for all indices $l$.
\item (e) at least one $V_j(w)$ is non-zero, S(w)=0 
and furthermore for some fixed $k$ 
the equality $w_lV_k(w)=w_kV_l(w)$ 
is satisfied for at least $n-3$ 
indices $l$ different from $k$.
\end{itemize}
\end{lemma}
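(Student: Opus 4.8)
\medskip
\noindent\emph{Sketch of a proof.}
The backbone is Lemma~2.1 together with two identities valid for \emph{every} configuration $w$, both obtained by antisymmetrising the double sum defining $V$:
$$\sum_k\Gamma_kV_k(w)=0,\qquad \sum_k\Gamma_kV_k(w)\,\bar w_k=\frac{iL}{2\pi};$$
under the standing assumption $L=0$ the second reads $\sum_k\Gamma_kV_k(w)\bar w_k=0$. I would establish (a)$\Leftrightarrow$(b)$\Leftrightarrow$(c), then (c)$\Rightarrow$(d)$\Rightarrow$(c), and finally (c)$\Rightarrow$(e)$\Rightarrow$(c).

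\medskip
The equivalence (a)$\Leftrightarrow$(b) is immediate from Lemma~2.1 and the remark following it: self-similarity means $V_k(w)-V_l(w)=\omega(w_k-w_l)$ for some $\omega\in\C$, and the sub-case ``expanding, collapsing or rotational'' is precisely $\omega\neq0$. For (b)$\Rightarrow$(c) one has $V_k(w)=\omega w_k-p$ with $p$ independent of $k$ (Lemma~2.1(c)); the first identity gives $0=\omega M(w)-p\sigma=-p\sigma$, so $p=0$ because $\sigma\neq0$, i.e.\ $V_k(w)=\omega w_k$; the converse is trivial. If (c) holds then $w_lV_k(w)=\omega w_kw_l=w_kV_l(w)$ for all $k,l$, and some $V_j(w)=\omega w_j$ is non-zero since the vortices are distinct (so not all $w_j$ vanish); this is (d). Moreover, inserting $V_k(w)=\omega w_k$ into the second identity gives $\omega S(w)=0$, hence $S(w)=0$ as $\omega\neq0$, and (e) follows. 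For (d)$\Rightarrow$(c) one takes the distinguished index $k$ to be one at which $V_k(w)\neq0$ (so $w_k\neq0$ as well), sets $\omega:=V_k(w)/w_k$, and reads off $V_l(w)=\omega w_l$ for all $l$ from $w_lV_k(w)=w_kV_l(w)$; and $\omega\neq0$, since otherwise every $V_l(w)=0$.

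\medskip
The substantive step is (e)$\Rightarrow$(c), and here the number $n-3$ is used. Again take $k$ with $V_k(w)\neq0$ (hence $w_k\neq0$) and put $\omega:=V_k(w)/w_k$; the hypothesis gives $V_l(w)=\omega w_l$ at $k$ and at the (at least) $n-3$ indices $l\neq k$ where $w_lV_k(w)=w_kV_l(w)$, leaving a set $F$ of at most two exceptional indices. Writing $\xi_l:=\Gamma_l(V_l(w)-\omega w_l)$ and feeding the already-known values into the two identities, the contributions with $l\notin F$ are eliminated using $M(w)=0$ and $S(w)=0$, leaving the homogeneous linear system $\sum_{l\in F}\xi_l=0$, $\sum_{l\in F}\bar w_l\,\xi_l=0$; since the vortices are pairwise distinct this forces $\xi_l=0$ for every $l\in F$, whence $V_l(w)=\omega w_l$ for all $l$ with $\omega\neq0$, which is (c).

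\medskip
I expect this last counting/linear-algebra step to be the main obstacle: morally it says that the two integrals $M$ and $S$, already spent to normalise the configuration, each ``buy back'' one of the missing scalar relations, so that $n-3$ of the original $n-1$ equations suffice. A secondary, fussy point is to justify choosing the distinguished index $k$ with $V_k(w)\neq0$ together with the attendant $w_k\neq0$: excluding the degenerate configurations in which these bilinear relations become vacuous rests on distinctness of the vortices, and for the smallest $n$ also on $L=S=M=0$.
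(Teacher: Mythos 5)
Your proof is correct and follows essentially the same route as the paper: the same two antisymmetrization identities $\sum_k\Gamma_kV_k=0$ and $\sum_k\Gamma_kV_k\bar w_k=iL/(2\pi)$ drive both (b)$\Rightarrow$(c) (via $p\sigma=0$) and the key step (e)$\Rightarrow$(c), where your $2\times2$ homogeneous system in the $\xi_l$ is exactly the paper's system in $D_l=V_kw_l-V_lw_k$ rescaled by $-w_k/\Gamma_l$, resolved by the same nonvanishing determinant coming from distinctness of the vortices. The only difference is that you are somewhat more explicit about choosing the distinguished index with $V_k(w)\neq0$ (hence $w_k\neq0$) and about why $\omega\neq0$, points the paper leaves implicit.
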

\begin{proof}
The previous lemma tells us that (a) and (b) are equivalent.
Obviously (c) implies (b). If (b) is satisfied 
then $p=V_k-\omega w_k$
is independent of $k$ and 
$$p\sigma=\sum_k\Gamma_k(V_k-\omega w_k)
=\sum\Gamma_kV_k-\omega M(w)=\sum\Gamma_kV_k=0.$$
Thus $p=0$. Clearly (c) and (d) are equivalent. Assume (c).
Then
$$\omega S
=\sum_k\Gamma_k\omega w_k\bar{w}_k
=\sum_k\Gamma_kV_k\bar{w}_k=iL=0$$
and (e) follows. Conversly (see \cite{oneil}, Lemma 1.2.7), 
assume $k=1$
and let $D_l=V_1w_l-V_lw_1$ be zero for $l=4..n$.
We have
$$\Gamma_2D_2+\Gamma_3D_3=\sum\Gamma_l(V_1w_l-V_lw_1)=V_1M=0,$$
$$\Gamma_2\bar{w}_2D_2+\Gamma_3\bar{w}_3D_3
=\sum_l\Gamma_lw_l(V_1w_l-V_lw_1)
=V_1S-w_1(iL)=0.$$
This set of equations has non-zero determinant as $w_2\neq w_3$,
and therefore $D_2=D_3=0$.
\end{proof}

\section{Some analytic results on collapsing}
\subsection{Collapsing systems of three vortices}

Collapsing systems of three vortices were first described
by Groebli in 1877.
The configuration space is two-dimensional.
The equations given in the previous section amount to
$$M=\sum\Gamma_kz_k=0,
\quad S=\sum\Gamma_k|z_k|=0.$$
For dimensional reasons we expect 
the solution set to be one-dimensional.
In fact the above equations can be easily transformed 
to one quadratic equation in two variables
which determines a circle in the plane.

\subsection{Collapsing systems of four and five vortices}
The motion of four vortices is no longer integrable.
Nevertheless in 1979 Novikov and Sedov
gave explicit examples
of collapsing systems 
of four and five vortices.
For four vortices their approach works
basically for only one specific circulation
and doesn't give all possible
collapsing systems even for that circulation.
From our point of view the specific circulation
has the property that the polynomial system of equations
factorizes so that one component of the solution set
(in the configuration space) forms a circle (or ellipse)
in a plane contained in the projective configuration space.
These examples can be therefore thought of 
as prototypes of a general situation described by O'Neil.

\subsection{O'Neil's results}
We shall briefly sketch those results obtained by O'Neil in 1987
which are of importance for this paper.
In his setting the existence of collapsing configurations
for arbitrary $n$ follows from two theorems.
\begin{theorem}(O'Neil \cite{oneil} Theorem 7.1.1)
Suppose $\Gamma_l>0$ for $l=1,...,n-1$,
$\Gamma_n<0$, and $L=0$. 
Then there are at least $s(n-2)!$ 
collinear rotational configurations, 
where $s$ is the number of pairs $(l,n)$ 
such that $\Gamma_l+\Gamma_n>0$.
\end{theorem}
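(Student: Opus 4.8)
The plan is to exhibit collinear rotational configurations explicitly, placing all $n$ vortices on the real line. Write $z_k = x_k \in \mathbb{R}$. A collinear rotational configuration with $M=0$, $S=0$ must satisfy condition (c) of the last lemma in Section 3, namely $V_k(w)=\omega w_k$ with $\omega$ purely imaginary (so that the motion is a rigid rotation). Since the velocity field evaluated at real points is $V_k = \frac{i}{2\pi}\sum_{l\neq k}\Gamma_l/(x_k-x_l)$, the equation $V_k=\omega x_k$ becomes the real system
$$\frac{1}{2\pi}\sum_{l\neq k}\frac{\Gamma_l}{x_k-x_l}=\lambda\, x_k,\qquad k=1,\dots,n,$$
for a common real $\lambda$ (with $\omega=i\lambda$), together with $\sum_k\Gamma_k x_k=0$. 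The condition $S=\sum\Gamma_k x_k^2=0$ is automatic once (c) holds, as shown in the proof of that lemma, because $L=0$. So the whole problem reduces to counting real solutions of this $n$-dimensional polynomial system modulo the obvious scaling $x\mapsto cx$ (which rescales $\lambda\mapsto\lambda/c^2$).

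Next I would set this up variationally. Consider the function $F(x)=\sum_{k<l}\Gamma_k\Gamma_l\log|x_k-x_l|$ restricted to the sphere $\{\sum_k\Gamma_k x_k^2 = 1\}$ inside the hyperplane $\{\sum_k \Gamma_k x_k = 0\}$ — this is essentially the Hamiltonian $H$ up to a constant. A Lagrange-multiplier computation shows that the critical points of $F$ on this constraint set are exactly the solutions of the system above. The domain $\{\sum\Gamma_k x_k^2=1\}\cap\{\sum\Gamma_k x_k=0\}$, with the $\binom{n}{2}$ collision hyperplanes $x_k=x_l$ removed, breaks into connected components according to the relative ordering of the coordinates; because exactly one circulation is negative, the sphere $\sum\Gamma_k x_k^2=1$ is genuinely an ellipsoid (not a hyperboloid) in suitable coordinates, hence compact, so $F\to -\infty$ near every collision and $F$ attains a maximum on each component. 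The heart of the argument is then a careful enumeration of which orderings give nonempty components and the observation that the pair $(\ell,n)$ with $\Gamma_\ell+\Gamma_n>0$ controls whether the cluster containing $x_n$ can sit between two positive vortices; accounting for the $(n-2)!$ ways of ordering the remaining $n-2$ positive vortices and the $s$ admissible positions relative to $z_n$ yields the count $s(n-2)!$, with each admissible component contributing at least one configuration (its maximum).

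The main obstacle is precisely this combinatorial bookkeeping: showing that distinct orderings yield geometrically distinct configurations (so there is no overcounting), that every one of the claimed $s(n-2)!$ components is actually nonempty and relatively compact away from collisions, and that the critical point one extracts is a genuine collinear rotational configuration with $\lambda\neq 0$ rather than a degenerate one — the case $\lambda=0$ would force a collinear \emph{stationary} configuration, which one must rule out using $\sigma\neq 0$ together with $L=0$ and the sign pattern of the $\Gamma_k$. A secondary technical point is verifying compactness of the constraint ellipsoid and the behavior of $F$ at its "boundary at infinity" within the hyperplane $M=0$; this is where the hypothesis that exactly one $\Gamma$ is negative, rather than the weaker $L=0$ alone, does real work. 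Once these points are settled, the theorem follows by summing the guaranteed critical points over all admissible ordering-components.
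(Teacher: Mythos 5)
The paper does not prove this statement: it is quoted verbatim from O'Neil \cite{oneil} as background, so there is no in-paper proof to compare against. Judged on its own merits, your Moulton-style variational plan has a fatal structural flaw, not just unfinished bookkeeping. First, the compactness claim is false. With $\Gamma_n<0$ the quadratic form $S(x)=\sum_k\Gamma_kx_k^2$ has signature $(n-1,1)$, and the hypothesis $L=0$ forces $\sigma>0$ (from $0=L=\sum_{i<j\le n-1}\Gamma_i\Gamma_j+\Gamma_n\sum_{l\le n-1}\Gamma_l$ one gets $-\Gamma_n<\tfrac12\sum_{l\le n-1}\Gamma_l$), so $(1,\dots,1)$ is a positive vector for this form and its $S$-orthogonal complement $\{M=0\}$ still carries signature $(n-2,1)$. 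Hence $\{S=1\}\cap\{M=0\}$ is a non-compact hyperboloid, not an ellipsoid, and the maxima of $F$ on the ordering components need not exist.

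Second, and more fundamentally, your setup contradicts itself. You correctly observe (as in the paper's Lemma 3.4) that $L=0$ together with $\omega\neq0$ forces $S=0$; therefore no rotational configuration can lie on your constraint set $\{S=1\}$. The Lagrange conditions on $\{S=1\}\cap\{M=0\}$ read $\sum_{l\neq k}\Gamma_l/(x_k-x_l)=2\lambda x_k+\mu$ for all $k$; pairing with the weights $\Gamma_k$ gives $\mu\sigma=0$, hence $\mu=0$, and pairing with $\Gamma_kx_k$ gives $L=2\lambda S=2\lambda$, hence $\lambda=0$. So every critical point of $F$ on your constraint set satisfies $V_k=0$ and is a \emph{stationary} configuration -- exactly the degenerate case you propose to ``rule out'' is the only case your variational principle can produce. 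The standard Moulton/Palmore argument you are adapting works when $L\neq0$ (there $\lambda=L/(2\pi S)\neq0$); the whole point of O'Neil's Theorem 7.1.1 is the degenerate case $L=0$, where the rotational configurations live on the null cone $S=0$, on which $F$ is scale-invariant, and a different normalization and existence argument is required. The combinatorial derivation of the count $s(n-2)!$ is also left entirely as an intention, but the argument would need to be rebuilt from the ground up before that becomes the issue.
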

\begin{theorem}(O'Neil \cite{oneil} Theorem 7.4.1)
Let $n>3$. For each complex $\Gamma$
satisfying $\Gamma_1=1$ and $L=0$, 
except for a subvariety of codimension 1, 
every collinear rotational configuration
lies on a one-dimensional family of collapsing configurations. 
Each family is a submanifold except at a finite number of points.
\end{theorem}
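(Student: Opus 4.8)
The plan is to encode collapsing configurations as (real points of) an algebraic variety, to show that a collinear rotational configuration is a smooth point of that variety sitting on a curve, and then to let the circulation vary so as to confine the exceptional behaviour to a codimension-one subvariety. Throughout I use Lemma 3.3, which says that a self-similar configuration with $\omega\neq0$ and $M=0$ is exactly one satisfying $V_k(w)=\omega w_k$; both collinear rotational configurations ($\omega$ purely imaginary) and collapsing ones ($\mathrm{Re}\,\omega<0$) are of this kind, so they all satisfy one and the same system of equations.

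First I would turn that system into polynomial form. Since $\dfrac{w_k-w_l}{|w_k-w_l|^2}=\dfrac1{\overline{w_k}-\overline{w_l}}$, the identity $V_k(w)=\omega w_k$ together with its complex conjugate becomes, after introducing a fresh vector of unknowns $u$ in place of $\overline w$, the holomorphic coupled system
$$\sum_{l\neq k}\frac{\Gamma_l}{u_k-u_l}=a\,w_k,\qquad
\sum_{l\neq k}\frac{\overline\Gamma_l}{w_k-w_l}=b\,u_k\qquad(k=1,\dots,n),$$
supplemented by the linear conditions $\sum\Gamma_kw_k=0$, $\sum\overline\Gamma_ku_k=0$ (and forcing $S(w)=0$, exactly as in the proof of Lemma 3.3). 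Clearing denominators and eliminating the scalars $a,b$ gives a bihomogeneous system in $([w],[u])\in\C P^{\,n-2}\times\C P^{\,n-2}$, whose zero locus I call $\mathcal V_\Gamma$. The bi-projective space carries the anti-holomorphic involution $\tau([w],[u])=([\overline u],[\overline w])$; the system is $\tau$-invariant, genuine vortex configurations are the $\tau$-fixed points $[u]=[\overline w]$, and a collinear rotational configuration is such a point lying additionally on the diagonal $[u]=[w]$ with a real representative. Moving such a point along a curve in $\mathcal V_\Gamma$ perturbs $\omega$ off the imaginary axis and hence produces exactly the collapsing (or expanding) configurations wanted, so everything reduces to understanding $\mathcal V_\Gamma$ near one of its diagonal real points.

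Next I would set up the incidence variety $\mathcal Z\subset\mathcal C\times(\C P^{\,n-2}\times\C P^{\,n-2})$ over $\mathcal C=\{\Gamma:\Gamma_1=1,\ L=0\}$, with projection $\pi:\mathcal Z\to\mathcal C$, and the sub-locus $\mathcal Z_0$ of collinear rotational configurations; by the complexification of Theorem 4.1, $\pi|_{\mathcal Z_0}$ is dominant with finite generic fibre, so a generic $\Gamma$ carries only finitely many collinear rotational configurations $c_0$. The crux is then a single linearised computation at a generic $(\Gamma,c_0)\in\mathcal Z_0$. Restricted to the diagonal, the coupled system is just the classical collinear relative-equilibrium system, whose Jacobian is invertible at a nondegenerate collinear configuration — this rules out deformations tangent to the diagonal, and holds for $\Gamma$ off a proper subvariety (one contribution to the excluded codimension-one set). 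It then remains to show that the full bihomogeneous system has corank exactly one at $c_0$, i.e. that $\mathcal V_\Gamma$ is a smooth curve there: the $\tau$-symmetry forces nearby solutions to come in a family transverse to the diagonal, giving corank $\ge1$, while corank $\le1$ amounts to the non-vanishing of a suitable maximal minor of the Jacobian. That minor is a polynomial in $(\Gamma,c_0)$; after substituting the algebraic dependence $c_0=c_0(\Gamma)$ it becomes an algebraic function on $\mathcal C$, which is not identically zero because it can be checked nonzero at one convenient circulation (a symmetric configuration, a known small-$n$ collapsing system, or a suitable degeneration to fewer vortices). The implicit function theorem then yields, for $\Gamma$ outside a codimension-one subvariety, a smooth one-dimensional branch of $\mathcal V_\Gamma$ through $c_0$; as the unique branch through a $\tau$-fixed point it is $\tau$-invariant, and its real locus is the asserted one-parameter family of collapsing configurations.

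Finally, by the above $\pi$ is a submersion along $\mathcal Z_0$ over a dense open subset of $\mathcal C$, so generic smoothness in characteristic zero gives that $\pi^{-1}(\Gamma)$ is smooth of pure dimension one for all $\Gamma$ outside a codimension-one subvariety, each collinear rotational configuration lying on one of its components; such a component, being a reduced algebraic curve, is a submanifold away from its finitely many singular points together with its finitely many intersections with the other components and with the locus of colliding vortices, which gives the last assertion. The hard part will be the linearised computation at a collinear rotational configuration — getting corank \emph{exactly} one, neither $0$ (which would make the configuration rigid, with no family) nor $\ge2$ (which would make the branch too large or non-reduced). That splits into the classical non-degeneracy of collinear relative equilibria — a statement about simple roots of Heine--Stieltjes-type polynomials, true only for $\Gamma$ outside a proper subvariety — and the explicit verification that the remaining maximal minor, a single polynomial in $\Gamma$, does not vanish identically, which is the one genuinely computational point on which the whole theorem rests.
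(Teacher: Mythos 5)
First, a point of comparison: the paper does not prove this statement at all. It is quoted (as O'Neil's Theorem 7.4.1) purely for context, and the authors stress that their own numerical results do not rely on it; so there is no in-paper argument to measure your sketch against. On its own terms, your sketch is a reasonable reconstruction of the kind of strategy such a theorem requires --- complexify by treating $\bar w$ as an independent variable $u$, regard the self-similarity equations $V_k(w)=\omega w_k$, $M(w)=0$ as a bihomogeneous system over the space of circulations with $L=0$, and argue by genericity in $\Gamma$ that the solution variety is a smooth curve through each collinear relative equilibrium whose $\tau$-fixed locus is the desired real one-parameter family.

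However, as a proof it has genuine gaps. (1) The whole theorem rests on the corank-one claim for the linearization at a collinear rotational configuration, and you do not establish it: you only assert that the relevant maximal minor ``can be checked nonzero at one convenient circulation'' without naming such a circulation or doing the check, and you concede this is the point on which everything rests. Relatedly, your claim that corank $\ge 1$ is \emph{forced} by the $\tau$-symmetry does not hold as stated --- an isolated, rigid $\tau$-fixed solution is perfectly compatible with the involution; the lower bound must instead come from counting equations ($2n-5$ real conditions on the $(2n-4)$-dimensional configuration space), which you never do. (2) Even granting a smooth one-dimensional real branch through the collinear configuration, you assert without argument that moving along it ``perturbs $\omega$ off the imaginary axis.'' This is not automatic: the branch could a priori consist entirely of relative equilibria with $\omega$ purely imaginary, and then it would contain no collapsing configuration whatsoever. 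Excluding that possibility --- for instance by showing that rotational configurations with $S=0$ are isolated for generic $\Gamma$, or that $\mathrm{Re}\,\omega$ is a nonconstant algebraic function on each branch --- is a separate, necessary step that your sketch omits entirely. Until (1) and (2) are supplied, the argument establishes at best the plausibility of the statement, not the statement itself.
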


\section{Approximations of collapsing configurations}
In this section we shall describe an approach
leading to high-accuracy
numerical approximations of collapsing configurations.
We shall illustrate the approach with a specific example 
of a collapsing system of seven vortices.

In the preceeding sections we reviewed some algebraic conditions
for a system of vortices to be collapsing. 
Write such a system of equations as
$$f(z)=0,\quad f:\C^n\to\R^m.$$
We find a numerical solution of this system of equations
using a basically very simple approach:
take an arbitrary point in $\C^n$ and
follow the integral curve of the vector field
$-\nabla g$, $g(z)=|f(z)|^2$,
as the method of steepest decent tells us to do,
until we obtain a local minimum of $g$.
Accept this point as a zero of $f$ 
if the value $f(z)$ is small enough.

Several issues require some care.
First of all it is not clear what range 
for the initial points should be taken.
We were lucky enough to obtain solutions 
for many randomly taken initial points.
Secondly it is not clear how small 
the value of $g(z)$ should be 
in order to be accepted.
Besides the steepest descent method works fine
for a vigorously changing function,
but is much less efficient when $f$ is close to zero,
since the derivatives of $g$ are close to zero then.
We decide to accept a point for which the steepest
descent method gives a value of $g(z)$ of several orders
smaller than the initial value, 
and change the method to one working 
like the Newton's method then.
This enables us to obtain solutions of accuracy 
of several dozen (or even several hundred) decimal digits.
The importance of high accuracy will be made clear later.

Another issue is the unique specification of the solution.
The O'Neil's results suggest that the collapsing configurations
should form one-dimensional familes (smooth curves).
Such a curve projects by a submersion 
onto at least one coordinate axis.
It follows that by fixing a value of this coordinate
we should obtain a system of equations
with discrete solutions.
This means that an initial point close to a solution
should specify that solution as a unique solution 
in some explicitely given neigbourhood of that point.

Finally let's mention an important question 
of whether the approximate solution
we found is really close to a true 
(precise) solution of the system.
This has been taken care of by an approach
using the implicit function theorem.
Without going into details note that to this end
it is enough to bound second derivatives 
of $f$ in a neighbourhood
of the point we found and show that both the value of $f$
and the norm of the inverse of the first 
derivative matrix at the point
is small enough.
Note also that the rigorous proof of the existence of
collapsing systems thus obtained does not rely on O'Neil's results.
In particular our approach works for diverse circulations,
not necessarily meeting 
the conditions given in the first O'Neil's theorem.
Moreover the collapsing configurations we obtain
have no direct relation to any collinear
rotational configuration,
as the second O'Neil's theorem requires.

We shall illustrate the above procedure
for seven vortices 
and for the circulation 
$\Gamma=2\pi\ (2,2,-4,-4,-4,-4,3)$.
We have $L=0$ and $\sigma=-9\cdot2\pi\neq0$,
so that the necessary condition on $\Gamma$
for the existence of collapsing configurations is fulfilled.
We know from one of previous sections 
that collapsing configurations of $n$ vortices
can be represented by solutions 
of the following system of algebraic equations:
$$M(z)=0,
\quad S(z)=0,
\quad V_1z_k=V_kz_1\quad(2\le k\le n-2).$$
For $n=7$ it corresponds to eleven real equations.
These equations are complex homogeneous
and therefore we can fix any non-zero 
complex coordinate of the solution.
By random search we find that the point
$P=(-1.31+10.00i,\ 4.51+5.39i,\ -1.27+7.59i,\ 1.21+2.06i,
\ -0.58-1.00i,\ 2.99-0.95i,\ 1.00+0.00i)$
may be close enough to a solution.
Therefore we choose to complete 
our set of equations with $z_7=1$
and $y_1=Im(z_1)=10$.
Now we have fourteen equations;
we expect this restricts the set of solutions to a finite set.
Thus our aim is to solve the equation $f(z)=0$ for
$$\C^7\ni z\to(V_1z_2-V_2z_1,V_1z_3-V_3z_1,
V_1z_4-V_4z_1,V_1z_5-V_5z_1,$$
$$z_7-1,\sum_k\Gamma_kz_k,\sum_k\Gamma_k|z_k|^2,
Im(z_1)-10)\in\C^6\times\R^2\approx\C^7.$$
By the method of steepest descent we follow the integral curve 
of the vector field $-\nabla g$ for $g(z)=|f(z)|^2$, 
starting from $P$ (given above).
The method allows us to arrive at a point $Q$ 
where the value $g(Q)=1.65\cdot10^{-6}$,
which is smaller than $g(P)=7.29$
by a factor of $2.26\cdot10^7$.
The values $y_1$ and $z_7$ are kept fixed 
and the other coordinates change by less than 0.01.
A measure of quality of the obtained point 
is the value $\omega=(V_k-V_l)/(z_k-z_l)$,
which should be independent of $k,l$ and have negative real part.
At $Q$ the real part of $\omega$ varies 
between  $-0.02625$ and $-0.02595$,
and $Im(\omega)$ is in $[-0.23185,-0.23118]$.

Now we change the method of seekeing a solution to one of Newton's type,
which allows us to obtain a solution practically
with as high accuracy as we wish.
With the standard machine precision of around 18 digits we can
get immediately a point $R$ with $g(R)\le10^{-34}$
and with 17 accurate digits in $Re(\omega)$. 
The expected collapse time is 
$t_\infty=\frac{-1}{2Re\omega}\approx 19.1806.$

\begin{figure}[ht]
\caption{Trajectories of $z_1$, $z_2$ and $z_4$
for time in different ranges.}
\includegraphics[width=140mm]{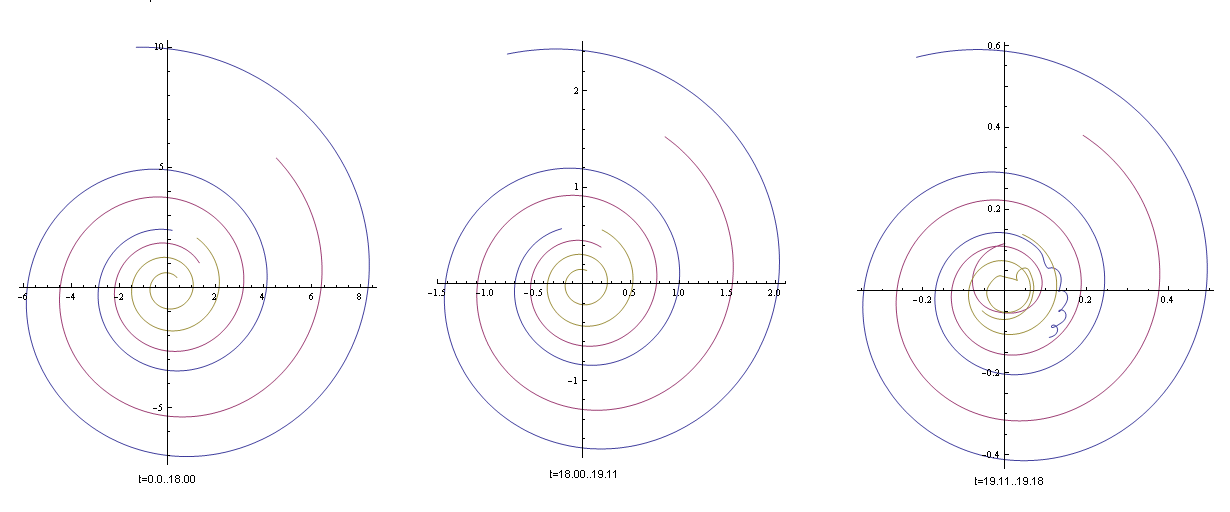}
\end{figure}

The three pictures in Fig. 1 contain the trajectiories 
of three sample vortices $z_1,z_2$ and $z_4$
for time in the range $[0.0, 18.0]$,
$[18.00, 19.11]$ and $[19.11, 19.18]$ respectively. 
Note the scale change in the pictures.
As the distance $|z_1(t)-z_2(t)|$
changes from $7.423$ at $t=0$ to $0.1866$ at $t=19.168$,
the collapsing scale is almost $40$.
The trajectories would seem to truly collapse 
to a point in a single low-resolution picture.
In the third picture suitable magnification
shows that for time close to $19.18$
the self-similarity of the system during evolution is lost.
An attempt to more deeply understand 
what happens near the critical (collapsing) time
is deferred to the next section.
Figure 2 shows the trajectory of $z_1$
in its full time-range. 
We also show the plot of the real and imaginary part of $z_1$,
separately for $t\in[0.0, 19.18]$, 
and separately in the critical range $[19.15, 19.18]$.
Again the loss of self-similarity is clearly seen 
for $t$ close to $19.18$.
\begin{figure}[ht]
\caption{Behavior of $z_1$ and $x_1=Re(z_1)$,
$y_1=Im(z_1)$ for $t\in[0.0, 19.18]$.}
\includegraphics[width=140mm]{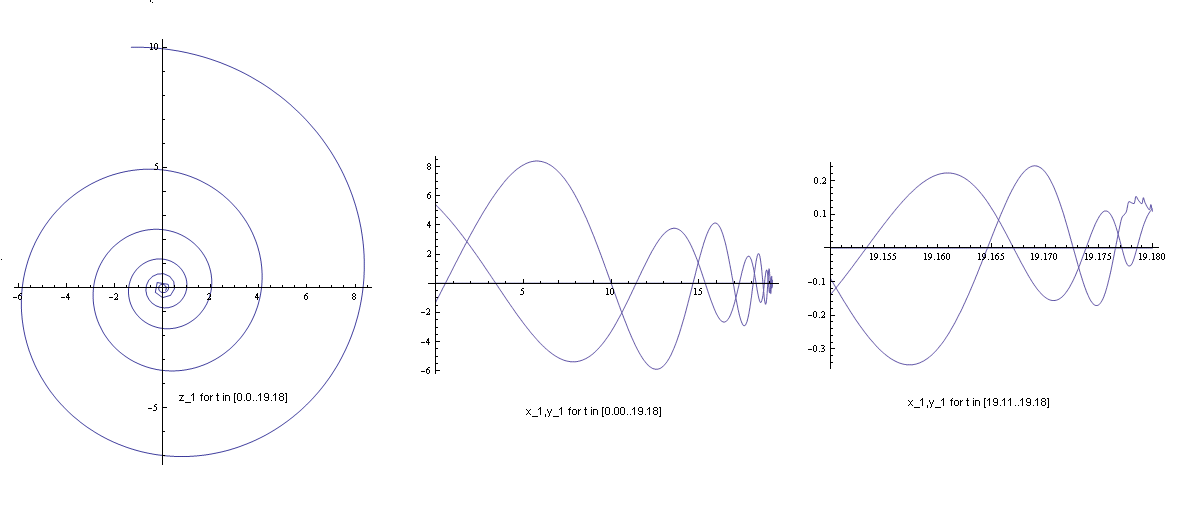}
\end{figure}
Figures 3 to 5 show the streamlines 
of the flow at different moments of time.
Note that the scale changes as the system collapses.
For time in $[0.0, 19.175]$ the system retains self-similarity
- the only visible change is rescaling and rotation.
The state of the system in two last pictures, 
for $t=19.177$ and $t=19.198$,
clearly is not similar to the previous states. 
Moreover in the last picture
the system begins to grow.
Note also that if the scale used for $t=0.0$
was applied to the last state $(t=19.177)$,
it would be impossible to visibly distiguish
the system from a single point vortex.
\begin{figure}[ht]
\caption{The streamlines of the flow
for $t=0.0, 12.0$}
\includegraphics[width=140mm]{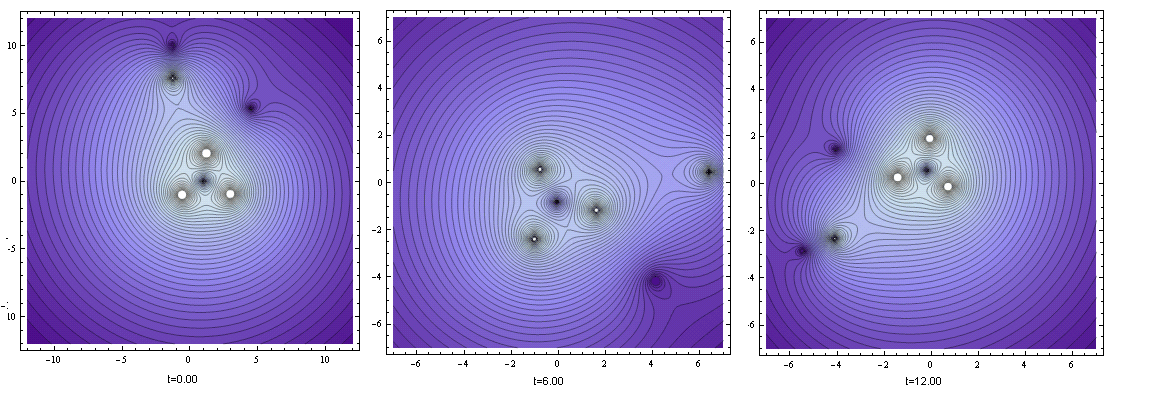}
\end{figure}
\begin{figure}[ht]
\caption{The streamlines of the flow
for $t=19.000, 19.168$}
\includegraphics[width=140mm]{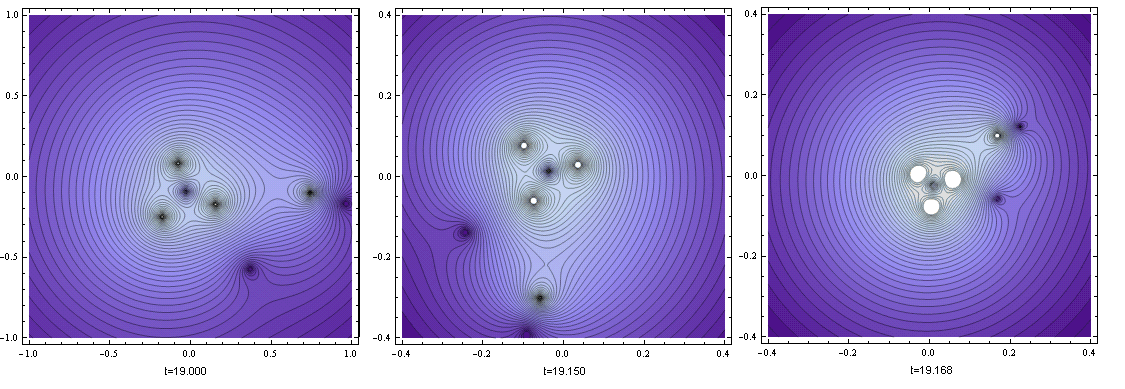}
\end{figure}
\begin{figure}[ht]
\caption{The streamlines of the flow
for $t=19.175, 19.198$}
\includegraphics[width=140mm]{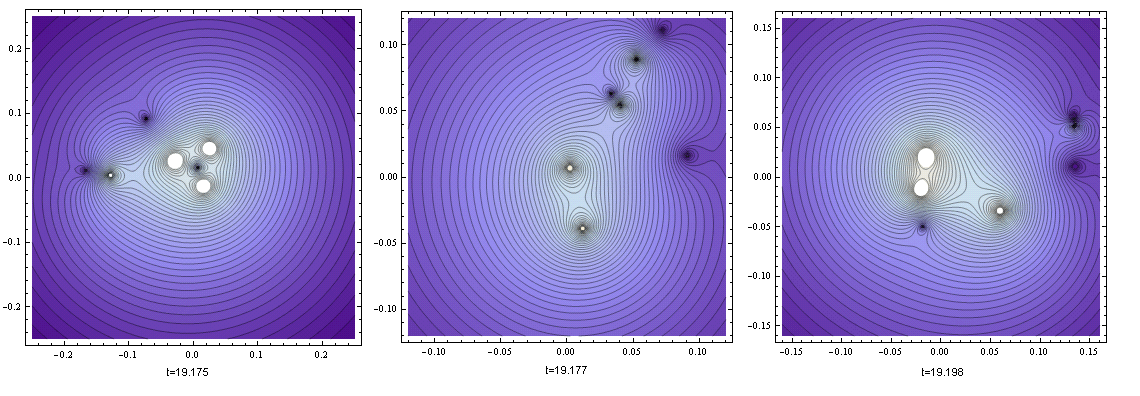}
\end{figure}

\section{Collapsing near critical time}
In this last section we want to address a question
which in our opinion is quite exciting 
and deserves further investigation.

Numerical examples of collapsing configurations
are given with some definite accuracy
and the best one can expect is that their evolution
resembles true collapse only in some
interval $[0,T]$ with $T\approx t_\infty$.
When time nears $t_\infty$
then the system should stay self-similar
with diminishing diameter,
but after passing $T$ the motion
becomes chaotic and the diameter goes up.
It turns out that in order to obtain
a system whose diameter lessens
by a factor of $10^3$
it may be necessary
to know the initial state of the system
with the accuracy of several dozen decimal digits.
This behavior is a feature of the dynamical system 
and not caused by the inaccuracy 
of numerical computations.
We illustate this issue
by taking different approximations 
of the collapsing system studied above
and presenting the behavior of those approximate solutions
during evolution
for times close to the collapsing time $t_\infty$ of the system.
We pay special attention to the collapsing scale,
that is the ratio of the diameter 
of the system at the initial state
and the minimal diameter of the system during evolution.
We want to observe the dependence of the collapsing scale
on the accuracy of the initial state.

Using {\it Mathematica}, we first solved 
the algebraic system of equations
with some precision (the number of decimal digits used)
in the range $[10..200]$,
and then took the so-obtained approximate collapsing system
as the initial data for the differential evolution equation,
which was solved with the same precision.
We tried to find the minimal value of $|z_1(t)-z_2(t)|^2$.
The minimum found and the corresponding value of time
are given in the table below. Tha last column contains
the calculated value of collapsing scale 
(the ratio of the initial diameter and the minimal one).
$$
\begin{array}{crrr}
prec&t&|z_1(t)-z_2(t)|^2&scale\\
10&19.2320&3.4000&30\\
15&19.1080&0.2000&123\\
20&19.1390&0.0100&551\\
30&19.1695&0.0130&483\\
50&19.1791&0.0016&1377\\
80&19.1802&5.5\cdot10^{-5}&7430\\
90&19.1804&9.8\cdot10^{-5}&5566\\
100&19.1804&2.0\cdot10^{-6}&38965\\
120&19.1804&1.0\cdot10^{-6}&55105\\
150&19.1804&2.3\cdot10^{-9}&1149018\\
200&19.1804&2.6\cdot10^{-10}&3417467\\
\end{array}
$$
The log-log graph in Figure 6
shows how the collapsing scale
depends on precision.
\begin{figure}[ht]
\label{logscale}
\caption{The log-log dependency of the collapsing scale on precision}
\includegraphics[width=120mm]{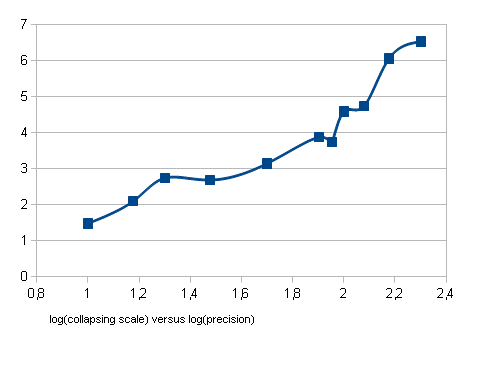}
\end{figure}

We also show some sample plots
of the function $|z_1(t)-z_2(t)|^2$ 
at the places of interest.
Note that for the self-similar collapsing evolution 
the function should be linear in $t$.
This is clearly seen up to a place close to the critical time, 
where self-similarity is lost.
\begin{figure}[ht]
\caption{$|z_1(t)-z_2(t)|^2$ for precision 20}
\includegraphics[width=140mm]{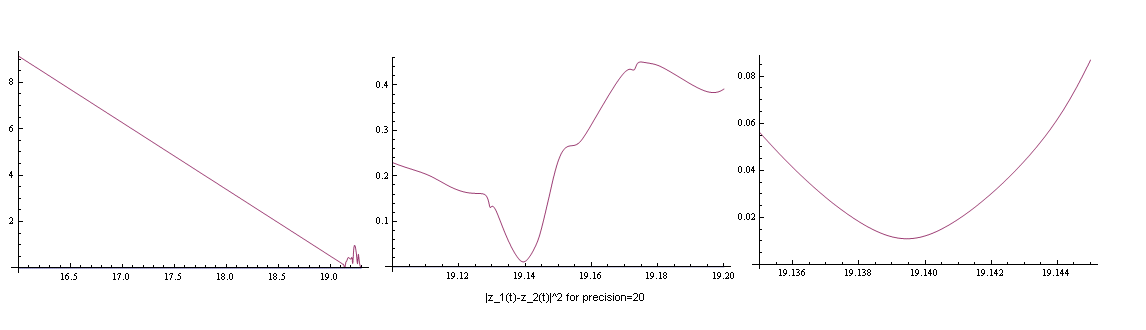}
\end{figure}
\begin{figure}[ht]
\caption{$|z_1(t)-z_2(t)|^2$ for precision 200}
\includegraphics[width=140mm]{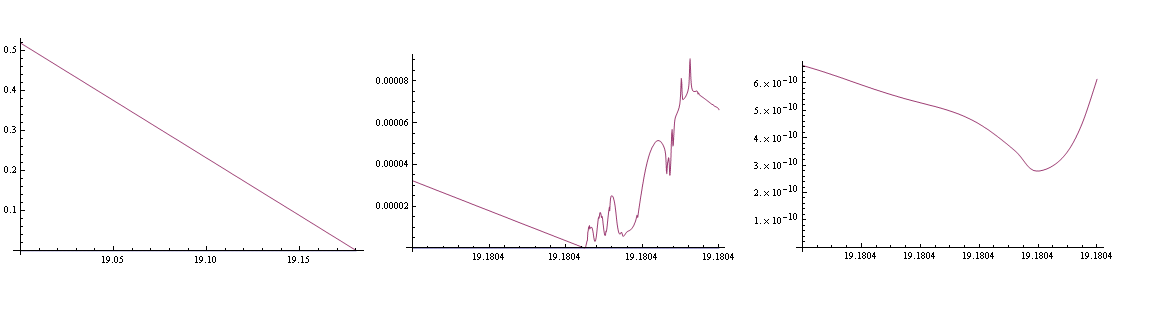}
\end{figure}

\bibliographystyle{plain}	
\bibliography{kkmp}
\end{document}